\newcommand{\F}{\mathbb{F}}
\newtheorem{theorem}{Theorem}
\newtheorem{corollary}{Corollary}
\newtheorem{conjecture}{Conjecture}
\renewcommand\footnotemark{}
\title{Classification of Planar Monomials Over Finite Fields of Small Order\thanks{This  work  was  funded  by Deutsche  Forschungsgemeinschaft  (DFG) under Germany's Excellence Strategy - EXC 2092 CASA - 390781972.}}
\author[1]{Christof Beierle}
\affil[1]{Faculty of Computer Science, Ruhr University Bochum, Bochum, Germany}
\author[2]{Patrick Felke}
\affil[2]{University of Applied Sciences Emden-Leer, Emden, Germany}
\date{\vspace{-5ex}}
\begin{document}

\maketitle

\begin{abstract}
    For all finite fields of order up to $2^{30}$, we computationally prove that there are no planar monomials besides the ones already known.
\end{abstract}

\section{Introduction}
Let $p$ be a prime and let $n$ be a positive integer. A polynomial $f \in \F_{p^n}[X]$ is called \emph{planar} if, for all $a \in \F_{p^n}\setminus \{0\}$, the polynomial $f(X+a)-f(X)$ is a permutation polynomial over $\F_{p^n}$. In this work, we focus on monomials $X^k$ for a positive integer $k$. If $f \in \F_{p^n}[X]$ is planar and of the form $X^k$, it is called a \emph{planar monomial} over $\F_{p^n}$. It is easy to see that $X^k$ is a planar monomial over $\F_{p^n}$ if and only if \[\Delta_k(X)\coloneqq (X+1)^k - X^k\] is a permutation polynomial over $\F_{p^n}$. Note that over finite fields of characteristic 2, planar polynomials do not exist. This is because solutions of the equation $f(x+a) - f(x) = b$ come in pairs $(x,x+a)$. In the following, we therefore only consider the case of $p$ being odd.

When it comes to classifying planar polynomials, an important concept is the notion of \emph{graph equivalence} (better known as CCZ-equivalence for $p=2$) ~\cite{DBLP:journals/dcc/CarletCZ98}. Two polynomials $f,g \in \F_{p^n}[X]$ are called \emph{graph equivalent} if there exists an invertible affine transformation $\phi$ over $\F_{p^n} \times \F_{p^n}$ such that $\Gamma_g = \phi(\Gamma_f)$, where $\Gamma_h \coloneqq \{(x,h(x)) \mid x \in \F_{p^n}  \}$ denotes the \emph{graph} of  $h \in \F_{p^n}[X]$. It is well known that graph equivalence preserves the planarity property of polynomials (see~\cite{DBLP:conf/waifi/KyureghyanP08}).

The following theorem lists the known planar monomials over finite fields, see e.g.,~\cite{DBLP:journals/dcc/Zieve15}. 
 
\begin{theorem}\label{thm:known}
If $0 \leq i < n$ and $p \frac{n}{\gcd(i,n)}$ is odd, then $X^{p^i+1}$ is a planar monomial over $\F_{p^n}$. Further, if $p=3$ and $2 < i < n$ and $\gcd(i,2n)=1$, the monomial $X^{e}$ with $e = \frac{3^i+1}{2}$ is planar over $\F_{p^n}$.
\end{theorem}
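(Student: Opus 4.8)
\emph{Proof plan.} Throughout I use the criterion recalled above: $X^k$ is planar over $\F_{p^n}$ if and only if $\Delta_k$ permutes $\F_{p^n}$, equivalently (as $\F_{p^n}$ is finite) if and only if $\Delta_k$ is injective. For the first statement I would simply compute $\Delta_{p^i+1}$: in characteristic $p$ the identity $(X+1)^{p^i}=X^{p^i}+1$ gives $\Delta_{p^i+1}(X)=(X+1)\bigl(X^{p^i}+1\bigr)-X^{p^i+1}=X^{p^i}+X+1$, a translate of the $\F_p$-linearized polynomial $L(X)=X^{p^i}+X$. Hence $\Delta_{p^i+1}$ permutes $\F_{p^n}$ if and only if $L$ does, i.e.\ if and only if $0$ is its only root in $\F_{p^n}$; a nonzero root would satisfy $x^{p^i-1}=-1$. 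Since the image of $x\mapsto x^{p^i-1}$ on $\F_{p^n}^{\ast}$ is the subgroup of order $(p^n-1)/(p^d-1)$ with $d=\gcd(i,n)$, and $-1$ is the unique element of order $2$ of $\F_{p^n}^{\ast}$, such a root exists exactly when $2\mid (p^n-1)/(p^d-1)$. Since $(p^n-1)/(p^d-1)=\sum_{j=0}^{n/d-1}p^{dj}$ is a sum of $n/d$ odd integers, this happens exactly when $n/d$ is even. So $\Delta_{p^i+1}$ is a permutation if and only if $n/\gcd(i,n)$ is odd, which, $p$ being odd, is the stated condition; I expect no difficulty here.

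For the second statement ($p=3$, $e=(3^i+1)/2$, $\gcd(i,2n)=1$, hence $i$ odd and $\gcd(i,n)=1$) I would again show $\Delta_e$ is injective on $\F_{3^n}$. The key arithmetic fact is $2e=3^i+1$, so $(z^e)^2=z\cdot z^{3^i}$ for all $z\in\F_{3^n}$, which ties $X^e$ back to the linearized polynomial of the first part. Starting from an assumed equality $\Delta_e(x)=\Delta_e(y)$ and writing $a=x^e$, $b=(x+1)^e$, $a'=y^e$, $b'=(y+1)^e$, $\delta=x-y$, $c=b-a=b'-a'$, squaring and using $(z^e)^2=z\cdot z^{3^i}$ yields $b^2-a^2=(x+1)(x^{3^i}+1)-x\cdot x^{3^i}=x^{3^i}+x+1$ and likewise $b'^2-a'^2=y^{3^i}+y+1$; subtracting these and using $b-a=b'-a'$ produces the relation $\delta^{3^i}+\delta=2c\,(x^e-y^e)$. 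I would then split on whether $x^e=y^e$. If $x^e=y^e$, the right side vanishes, so $\delta=0$ (done) or $\delta^{3^i-1}=-1$; by the first part the latter cannot occur when $n$ is odd, and when $n$ is even I would square $x^e=y^e$ and use $\delta^{3^i}=-\delta$ to deduce $x^{3^{2i}}=x$, hence $x\in\F_{3^{\gcd(2i,n)}}=\F_9$ and $y=x^3$, and then finish by a direct check in $\F_9$, using that $i$ is odd (so $3^i+1\equiv 4\pmod 8$).

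The remaining case $x^e\neq y^e$ is where I expect the real work to lie, and it is the main obstacle: one has to produce a \emph{second} relation among $x,y,c$ that is independent of the one above and then iterate the squaring-and-substitution process, and it is exactly here that the full hypothesis $\gcd(i,2n)=1$ (not just $\gcd(i,n)=1$) must be used. This is the technical core of the original Coulter--Matthews argument; a clean shortcut is to invoke that argument, or the exposition in~\cite{DBLP:journals/dcc/Zieve15}, for this last step.
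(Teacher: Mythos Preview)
The paper does not prove Theorem~\ref{thm:known}; it simply records it as the list of known planar monomials and points to the literature (Coulter--Matthews~\cite{DBLP:journals/dcc/CoulterM97} and the survey statement in~\cite{DBLP:journals/dcc/Zieve15}). So there is no ``paper's own proof'' to compare your proposal against---your plan already goes well beyond what the paper does.

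On the content of your plan itself: the first part is complete and correct. Your computation $\Delta_{p^i+1}(X)=X^{p^i}+X+1$ is right, and the parity argument for $(p^n-1)/(p^d-1)=\sum_{j=0}^{n/d-1}p^{dj}$ cleanly gives the criterion $n/\gcd(i,n)$ odd. For the second part, the reduction via $(z^e)^2=z^{3^i+1}$ and the identity $\delta^{3^i}+\delta=2c\,(x^e-y^e)$ are both correct, and your treatment of the subcase $x^e=y^e$ (including the $\F_9$ reduction when $n$ is even) is fine. Your honest flag that the subcase $x^e\neq y^e$ is the technical core and that you would cite Coulter--Matthews for it is appropriate; that is exactly what the paper effectively does for the whole theorem. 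If you want a self-contained write-up, that remaining case is where the work lies: one squares the relation $b-a=b'-a'$ a second time (using $a^2=x^{3^i+1}$, $b^2=(x+1)^{3^i+1}$, etc.) to obtain a second independent relation, eliminates to get a single equation in $\delta$ alone, and then uses $\gcd(i,2n)=1$ to force $\delta=0$; the original argument in~\cite{DBLP:journals/dcc/CoulterM97} carries this out.
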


The planarity of monomials $X^e$ with $e = \frac{3^i+1}{2}$ over finite fields of characteristic $3$ was discovered in~\cite{DBLP:journals/dcc/CoulterM97} and a special case of it independently in~\cite{DBLP:journals/aaecc/HellesethS97}. Before that, a conjecture by Dembowski and Ostrom (see~\cite{dembowski1968planes}) stated that a planar polynomial in $\F_{p^n}[X]$ must necessarily be of the form $\sum_{i,j \in \{0,\dots,n-1\}} \alpha_{i,j}X^{p^i+p^j} + \sum_{i \in \{0,\dots,n-1\}} \beta_i X^{p^i} + c$. Note that the terms $\beta_i X^{p^i}$ and the constant term $c$ could be omitted if we are only interested in polynomials up to graph equivalence. Although this conjecture is false for $p=3$ (because of the construction in~\cite{DBLP:journals/dcc/CoulterM97} and~\cite{DBLP:journals/aaecc/HellesethS97}), it is completely open for $p \geq 5$. Besides over the finite fields $\F_{p^i}$ for $i \in \{1,2,3,4\}$, a classification of planar monomials is not known. The following conjecture is a longstanding open problem. 
\begin{conjecture}\label{conj:open}
Up to graph equivalence, the monomials described in Theorem~\ref{thm:known} are the only planar monomials. 
\end{conjecture}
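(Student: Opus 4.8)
The plan is to attack the conjecture through the classical dichotomy between ``small'' and ``large'' exponents, combining algebraic--geometric point counting with a reduction that exploits graph equivalence; throughout write $q = p^n$. Since $X^k$ is planar over $\F_q$ if and only if $\Delta_k$ is a permutation polynomial, the central object is the plane curve
\[
\Phi_k(X,Y) \coloneqq \frac{\Delta_k(X) - \Delta_k(Y)}{X-Y},
\]
a polynomial of degree at most $k-2$ whose off-diagonal $\F_q$-rational zeros are exactly the collisions $\Delta_k(X)=\Delta_k(Y)$ with $X \neq Y$; planarity is precisely the statement that $\Phi_k$ has no such zero. First I would record the exponent reductions afforded by graph equivalence. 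Applying the Frobenius $z \mapsto z^p$ to both coordinates is an $\F_p$-linear bijection of $\F_q \times \F_q$, so $X^k$ and $X^{pk}$ are graph equivalent (reading exponents modulo $q-1$), and when $\gcd(k,q-1)=1$ the coordinate swap shows $X^k \sim X^{k'}$ with $kk' \equiv 1 \pmod{q-1}$. It therefore suffices to classify planar exponents up to the group generated by multiplication by $p$ and, where defined, inversion modulo $q-1$, choosing a convenient representative in each orbit.

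For the small-exponent regime I would invoke the Hasse--Weil bound. If $\Phi_k$ had an absolutely irreducible factor defined over $\F_q$, that factor alone would carry at least $q - O(\deg(\Phi_k)^2\sqrt q)$ affine $\F_q$-rational points, of which at most $O(k)$ lie on the diagonal $X=Y$; hence for $q$ exceeding an absolute constant times $k^4$ an off-diagonal $\F_q$-point would exist, contradicting planarity. Consequently, whenever $k \lesssim q^{1/4}$, planarity forces $\Phi_k$ to have \emph{no} absolutely irreducible component over $\overline{\F_q}$ --- a purely geometric constraint on the integer $k$ and the characteristic $p$. The remaining task in this regime is to classify the exponents for which $\Phi_k$ is geometrically degenerate in this sense (the ``exceptional'' planar monomials), which I would carry out by a monodromy/Galois analysis of the cover $z \mapsto \Delta_k(z)$, expecting the outcome to be exactly the families of Theorem~\ref{thm:known}; substantial parts of this exceptional classification are already available.

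The difficulty --- and the reason the conjecture is open for $p \ge 5$ --- lies entirely in the large-exponent regime $q^{1/4} \lesssim k < q$, where the Weil estimate is vacuous. Here my plan is a weight-reduction argument: writing $k$ in base $p$, I would try to prove that the graph-equivalence orbit of every planar exponent contains a representative of low $p$-weight, namely weight $2$ when $p \ge 5$ (the monomial form of the Dembowski--Ostrom prediction) and the Coulter--Matthews exponent $\frac{3^i+1}{2}$ when $p=3$. Such a representative would be small enough to fall under the small-exponent analysis and hence be forced into the known families. The inversion step $k \mapsto k'$ is the natural lever, since it can turn a high-weight exponent into a low-weight one, but nothing guarantees that either $k$ or $k'$ drops below the $q^{1/4}$ threshold, inversion is unavailable when $\gcd(k,q-1)>1$, and there is no known uniform bound on the $p$-weight of planar exponents.

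Closing this gap is the main obstacle. Ruling out sporadic planar monomials whose exponent is comparable to $q$, by a method that does not degenerate as $k \to q$, appears to require a genuinely new analytic or structural input; it is precisely the part for which only computational evidence --- such as the verification up to order $2^{30}$ --- is presently available. I would therefore expect a complete proof to hinge on a uniform lower bound for the relevant Weil sums $\sum_{x \in \F_q}\psi\bigl(c\,\Delta_k(x)\bigr)$, or an equivalent structural obstruction, that remains effective across the entire middle range of exponents.
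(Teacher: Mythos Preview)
The statement you are addressing is labeled a \emph{Conjecture} in the paper, and the paper does not prove it; it is explicitly described there as ``a longstanding open problem.'' The paper's contribution is purely computational: it verifies the conjecture for all finite fields of order at most $2^{30}$ and a handful of slightly larger fields, using a birthday-collision test on $\Delta_k$. There is thus no ``paper's own proof'' to compare against.

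Your proposal is, as you yourself say, a research plan rather than a proof, and you correctly locate the obstruction. The small-exponent regime you describe is precisely Zieve's theorem, which the paper cites: for $p^n \ge (k-1)^4$ and $p \nmid k$, planarity of $X^k$ forces $k$ into the list of Theorem~\ref{thm:known}. Your Hasse--Weil/exceptional-cover sketch is the standard route to that result. The large-exponent regime is exactly the open part, and your admission that ``closing this gap is the main obstacle'' and that only computational evidence is available is an accurate assessment --- it is the very gap the paper's computations are probing.

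One technical point worth correcting in your reduction step: you list inversion $k \mapsto k'$ with $kk' \equiv 1 \pmod{q-1}$ as a lever for lowering the $p$-weight, noting only that it is ``unavailable when $\gcd(k,q-1)>1$.'' In fact it is \emph{never} available for planar monomials: as the paper recalls (Corollary~\ref{cor:equivalence} and the remark preceding it, from Coulter--Matthews), planarity of $X^k$ over $\F_{p^n}$ forces $\gcd(k,p^n-1)=2$, so the only graph-equivalence moves on planar exponents are the cyclotomic shifts $k \mapsto p^a k \bmod (p^n-1)$. This removes one of the tools you were hoping to use and makes the weight-reduction strategy harder, not easier. Also, the phrase ``no absolutely irreducible component over $\overline{\F_q}$'' should read ``no absolutely irreducible component defined over $\F_q$''; every curve has absolutely irreducible components over the algebraic closure.
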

In~\cite{DBLP:journals/dcc/Zieve15}, Zieve verified this conjecture for finite fields of sufficiently large order. In particular, Zieve proved that for $p^n \geq (k-1)^4$ and $p \nmid k$, the monomial $X^k \in \F_{p^n}[X]$ is planar only if it is of the form as in Theorem~\ref{thm:known}. We are not aware of any other reference that states a systematic check of Conjecture~\ref{conj:open} for finite fields of small order. In this work, we provide such a result by computationally verifying the conjecture for all finite fields of order at most $2^{30}$, as well as for the finite fields $\F_{37^6},\F_{17^8},\F_{19^8}$, $\F_{11^9}$, and $\F_{11^{10}}$. In particular, there are no other planar monomials in those finite fields besides the ones already known.

\paragraph{Related work.}  Helleseth, Rong and Sandberg conducted extensive computer search in the 1990s to classify $\ell$-uniform monomials. We recall that a monomial $X^k \in \F_{p^n}[X]$ is called $\ell$-\emph{uniform} if $\max_{b \in \F_{p^n}} \lvert \{x \in \F_{p^n} \mid (x+1)^k -x^k = b \} \rvert$ is equal to $\ell$.  These numerical results are well known as the Helleseth-Rong-Sandberg tables (see e.g.,~\cite{DBLP:journals/tit/HellesethRS99}). This search did not give any new insights to Conjecture \ref{conj:open}. In the case of $p=2$, Dobbertin and Canteaut computationally classified all almost perfect nonlinear (APN) monomials, i.e., $2$-uniform monomials, up to $n=26$. Moreover, in~\cite{leander2008exponents} Leander and Langevin classified all almost bent (AB) monomials up to $n=33$. Also for $p=2$, Edel classified all APN monomials up to $n \leq 34$ and for $n \in \{36,38,40,42\}$ (see~\cite[p.\@ 422]{carlet_2021}).

\section{Method}
We will make extensive use of the following characterization of graph equivalence for monomials.
\begin{theorem}[\cite{DBLP:journals/dcc/Dempwolff18}]
The monomials $X^k$ and $X^\ell$ over $\F_{p^n}$ are  graph equivalent if and only if there exists an integer $a$ with $0 \leq a < n$ such that $\ell = p^ak \mod p^n-1$ or $k\ell = p^a \mod p^n-1$.
\end{theorem}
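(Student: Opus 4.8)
The \emph{if} direction I would settle by exhibiting explicit affine transformations. The map $\sigma_a\colon(x,y)\mapsto(x,y^{p^a})$ is an invertible $\F_p$-linear transformation of $\F_{p^n}\times\F_{p^n}$, and $\sigma_a(\Gamma_{X^k})=\Gamma_{X^{p^ak}}$ (with exponents reduced modulo $p^n-1$), which handles the case $\ell\equiv p^ak\pmod{p^n-1}$. For the case $k\ell\equiv p^a\pmod{p^n-1}$, observe that $\gcd(p^a,p^n-1)=1$ forces $\gcd(k,p^n-1)=1$, so $X^k$ is a permutation of $\F_{p^n}$ with compositional inverse $X^{k'}$ for $kk'\equiv1\pmod{p^n-1}$; the coordinate swap $S\colon(x,y)\mapsto(y,x)$ is an invertible $\F_p$-linear transformation with $S(\Gamma_{X^k})=\Gamma_{X^{k'}}$, and since $\ell\equiv k'p^a\pmod{p^n-1}$, composing $S$ with a suitable $\sigma_a$ carries $\Gamma_{X^k}$ onto $\Gamma_{X^\ell}$.

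For the \emph{only if} direction the plan is a group-theoretic argument built around the multiplier group
\[
  M_k\ \coloneqq\ \{\,\mu_u\colon(x,y)\mapsto(ux,u^ky)\ \mid\ u\in\F_{p^n}^{*}\,\}\ \leq\ \AGL,
\]
which is cyclic of order $p^n-1$ and stabilises $\Gamma_{X^k}$ setwise because $\mu_u(x,x^k)=(ux,(ux)^k)$; likewise $M_\ell$ stabilises $\Gamma_{X^\ell}$. If $\phi(\Gamma_{X^k})=\Gamma_{X^\ell}$ for some invertible $\F_p$-affine $\phi$, then $C\coloneqq\phi M_k\phi^{-1}$ is again a cyclic order-$(p^n-1)$ subgroup of $\AGL$ stabilising $\Gamma_{X^\ell}$. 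After disposing of a few degenerate cases (such as $X^k$ being $\F_p$-linear), in which the claim is immediate, I would show that $\phi$ can be normalised---by composing it with a suitable element of the stabiliser of $\Gamma_{X^\ell}$---to an $\F_p$-linear map with $C=M_\ell$ that is either ``block diagonal'', $(x,y)\mapsto(A(x),D(y))$, or ``block anti-diagonal'', $(x,y)\mapsto(B(y),A(x))$, for appropriate $\F_p$-linear bijections of $\F_{p^n}$. The mechanism is spectral: an element of order $p^n-1$ acting $\F_p$-linearly on a $2n$-dimensional space is semisimple and has a primitive $(p^n-1)$-th root of unity among its eigenvalues, and such a root generates $\F_{p^n}$ over $\F_p$, so tracking the Galois-stable eigenspace decomposition forces one of the two block shapes together with the intertwining relation $\phi\mu_u\phi^{-1}=\mu_{\rho(u)}$ for a bijection $\rho$ of $\F_{p^n}^{*}$; the translation part is eliminated using that the origin is the unique common fixed point of $M_k$.

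In the block-diagonal case the intertwining relation unwinds to $A(uz)=\rho(u)A(z)$ and $D(u^kz)=\rho(u)^{\ell}D(z)$ for all $u\in\F_{p^n}^{*}$ and $z\in\F_{p^n}$, while the graph condition reads $D(x^k)=A(x)^{\ell}$. Putting $z=1$ in the first relation shows $u\mapsto A(u)/A(1)$ is an $\F_p$-linear multiplicative bijection of $\F_{p^n}$, hence a power of the Frobenius, so $A(x)=A(1)x^{p^a}$ and $\rho(u)=u^{p^a}$. Counting the images of both sides of $D(x^k)=A(1)^{\ell}x^{p^a\ell}$ gives $\gcd(k,p^n-1)=\gcd(\ell,p^n-1)$, whence the multiplications by the $k$-th powers span in $\mathrm{End}_{\F_p}(\F_{p^n})$ a subalgebra isomorphic to a subfield $\F_{p^m}\subseteq\F_{p^n}$ that is invariant under conjugation by $D$; conjugation by $D$ then acts on $\F_{p^m}$ as an $\F_p$-algebra automorphism, i.e.\ as $w\mapsto w^{p^d}$ for some $d$. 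Comparing $D(u^k)=\rho(u)^{\ell}D(1)=u^{p^a\ell}D(1)$ with $D(u^k)=(u^k)^{p^d}D(1)$ yields $p^a\ell\equiv kp^d\pmod{p^n-1}$, i.e.\ $\ell\equiv p^{d-a}k\pmod{p^n-1}$. The block anti-diagonal case is analogous but additionally forces $\gcd(k,p^n-1)=1$ (the graph relation $A(x)=B(x^k)^{\ell}$ makes $x\mapsto B(x^k)$ a bijection), so $A$ and $B$ are each a scalar times a power of the Frobenius and one obtains $k\ell\equiv p^{b-c}\pmod{p^n-1}$.

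The step I expect to be the main obstacle is precisely the normalisation described above: one must understand the stabiliser of $\Gamma_{X^\ell}$ inside $\AGL$ well enough to know that it contains no unexpected cyclic subgroup of order $p^n-1$ beyond $M_\ell$ and its conjugates under the Frobenius and the swap, and, relatedly, one must rule out a nonzero translation part---which is awkward because translating a monomial graph generically yields a non-monomial graph. Within this step, the subcase $\gcd(k,p^n-1)>1$, where the $k$-th powers span only a proper subfield of $\F_{p^n}$ over $\F_p$, is the technical crux; the subfield-automorphism observation in the previous paragraph is what makes it go through once the block form has been reached.
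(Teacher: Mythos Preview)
The paper does not prove this theorem at all: it is quoted verbatim from Dempwolff's article and used as a black box, with the only argument supplied being the short deduction of Corollary~\ref{cor:equivalence}. So there is no ``paper's own proof'' against which to compare your proposal.

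That said, your outline follows the natural (and, in essence, Dempwolff's) route: exhibit explicit graph automorphisms for the \emph{if} direction, and for the \emph{only if} direction transport the cyclic multiplier group $M_k$ through $\phi$ and analyse its action. Your \emph{if} direction is correct as written. For the converse, the subfield trick you describe---observing that the $\F_p$-span of the $k$-th powers is a subfield $\F_{p^m}$ on which conjugation by $D$ acts as a Frobenius power---is exactly the mechanism that handles the non-coprime case, and your derivation of $p^a\ell\equiv kp^d\pmod{p^n-1}$ from it is sound.

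The place where your sketch is genuinely incomplete, and which you correctly flag, is the normalisation of $\phi$ to block (anti-)diagonal linear form with $\phi M_k\phi^{-1}=M_\ell$. Two points deserve more care. First, eliminating the translation part: the origin is the unique common fixed point of $M_k$ only when $k\not\equiv 0\pmod{p^n-1}$, and even then one must argue that the conjugate group $C$ fixes a point of $\Gamma_{X^\ell}$, which lets you pre-compose $\phi$ with a translation along the graph. Second, your spectral claim that a semisimple element of order $p^n-1$ on $\F_p^{2n}$ forces a decomposition into two Frobenius-stable $n$-dimensional blocks is not automatic: such an element could a priori have eigenvalues generating a proper subfield, with higher multiplicities. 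What saves you is that a generator $\mu_g$ of $M_k$ has characteristic polynomial $(X^{p^n}-X)'$-free factors coming from the two coordinates, so over $\overline{\F_p}$ its eigenvalues are $g,g^p,\dots,g^{p^{n-1}}$ and $g^k,g^{kp},\dots$, each a full Galois orbit; this is what pins down the block structure and should be made explicit.
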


This characterization simplifies for the case of planar monomials. In particular, a monomial $X^k$ being planar over $\F_{p^n}$ implies that $\gcd(k,p^n-1) = 2$, see~\cite{DBLP:journals/dcc/CoulterM97}. We therefore have the following.

\begin{corollary}
\label{cor:equivalence}
Two planar monomials $X^k$ and $X^\ell$ over $\F_{p^n}$ are graph equivalent if and only if there exists an integer $a$ with $0 \leq a < n$ such that $\ell = p^ak \mod p^n-1$.
\end{corollary}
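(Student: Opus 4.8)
The plan is to obtain Corollary~\ref{cor:equivalence} as an immediate consequence of Dempwolff's theorem, the only real work being to discard the second of the two alternatives that theorem offers. The reverse implication, and the ``$\ell = p^a k \bmod p^n-1$'' branch of the forward implication, are contained verbatim in Dempwolff's theorem (both $X^k$ and $X^\ell$ being planar, hence in particular polynomials over $\F_{p^n}$ to which the theorem applies), so I would dispose of them in a line. For the forward direction, suppose $X^k$ and $X^\ell$ are graph equivalent planar monomials over $\F_{p^n}$; then there is an $a$ with $0 \le a < n$ such that either $\ell \equiv p^a k \pmod{p^n-1}$, in which case we are done, or $k\ell \equiv p^a \pmod{p^n-1}$, which I claim is impossible.

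To derive the contradiction I would invoke the fact recalled just before the statement, that planarity of $X^k$ forces $\gcd(k,p^n-1)=2$; since $p$ is odd, $p^n-1$ is even, so this in particular yields $2 \mid k$. If $k\ell \equiv p^a \pmod{p^n-1}$, then taking greatest common divisors with $p^n-1$ on both sides (using that reduction modulo $p^n-1$ does not change the gcd with $p^n-1$) gives $\gcd(k\ell,\,p^n-1) = \gcd(p^a,\,p^n-1) = 1$, the last equality because $p$ is coprime to $p^n-1$. But $2 \mid k \mid k\ell$ and $2 \mid p^n-1$, so $2$ divides $\gcd(k\ell,\,p^n-1)$, contradicting that this gcd equals $1$. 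Hence only the first alternative can occur, which is precisely the asserted equivalence.

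There is essentially no hard step: once Dempwolff's characterization and the divisibility constraint $\gcd(k,p^n-1)=2$ are in hand, the elimination of the second case is a one-line gcd computation. The only point deserving a moment's care is bookkeeping about representatives — that the conditions ``$\ell \equiv p^a k$'' and ``$k\ell \equiv p^a$'' modulo $p^n-1$, as well as the quantities $\gcd(k,p^n-1)$ and $\gcd(\ell,p^n-1)$, are all unaffected by replacing $k$ and $\ell$ by congruent integers — so that the statement is unambiguous regardless of which residues one picks. I would note this once rather than belabour it, and otherwise keep the proof to a few lines.
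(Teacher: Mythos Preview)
Your proposal is correct and follows essentially the same approach as the paper: invoke Dempwolff's characterization and eliminate the second alternative $k\ell \equiv p^a \pmod{p^n-1}$ using the fact that a planar monomial satisfies $\gcd(k,p^n-1)=2$. The paper phrases the final step slightly differently (multiplying by $p^{n-a}$ to obtain $p^{n-a}k\ell \equiv 1$, hence $\gcd(\ell,p^n-1)=1$), but this is the same gcd argument you give.
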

\begin{proof}
Suppose there exists an integer $a \in \{0,\dots,n-1\}$ such that $k \ell = p^a \mod p^n-1$. We then have $p^{n-a}k \cdot \ell = 1 \mod p^n-1$,  which implies that $\gcd(\ell,p^n-1) = 1$, a contradiction to the planarity of $X^{\ell}$.
\end{proof}

For an odd prime $p$, the classification of planar monomials over the fields $\F_{p},\F_{p^2},\F_{p^3}$, and $\F_{p^4}$ is known from the works~\cite{johnson1987projective},~\cite{coulter2006classification},~\cite{coulter_new}, and~\cite{DBLP:journals/ffa/CoulterL12}, respectively.

\begin{theorem}[\cite{johnson1987projective,coulter2006classification,coulter_new,DBLP:journals/ffa/CoulterL12}]
\label{thm:planar_subfield}
Let $k$ be a positive integer and let $p$ be an odd prime. Then, the following assertions hold:
\begin{enumerate}
    \item The monomial $X^k$ is planar over $\F_p$ if and only if $k = 2 \mod (p-1)$.
    \item The monomial $X^k$ is planar over $\F_{p^2}$ if and only if $k = 2 \mod (p^2-1)$ or $k = 2p \mod (p^2-1)$.
    \item The monomial $X^k$ is planar over $\F_{p^3}$ if and only if $k = p^i + p^j \mod (p^3-1)$ for $i,j \in \{0,1,2\}$.
    \item If $p \geq 5$, the monomial $X^k$ is planar over $\F_{p^4}$ if and only if $k =2p^j \mod (p^4-1)$ for $j \in \{0,1,2,3\}$.
\end{enumerate}
\end{theorem}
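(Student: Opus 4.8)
The plan is to treat the two implications separately, deriving the ``if'' direction directly from Theorem~\ref{thm:known} together with the fact that graph equivalence preserves planarity, and reducing the ``only if'' direction to the four cited classifications.

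For the ``if'' direction, observe that planarity depends only on the function induced by $X^k$, so we may reduce $k$ modulo $p^n-1$. Theorem~\ref{thm:known} with $i=0$ shows that $X^2$ is planar over $\F_p$, $\F_{p^2}$ and $\F_{p^4}$ (here $p\cdot\frac{n}{\gcd(0,n)}=p$ is odd), and with $i\in\{0,1,2\}$ that $X^2$, $X^{p+1}$ and $X^{p^2+1}$ are planar over $\F_{p^3}$ (here $p\cdot\frac{3}{\gcd(i,3)}\in\{p,3p\}$ is odd). Now $2p^j\equiv p^j\cdot 2\pmod{p^n-1}$, and for $i\le j$ one has $p^i+p^j\equiv p^i(p^{j-i}+1)\pmod{p^3-1}$ with $j-i\in\{0,1,2\}$. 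Hence, by the characterization of graph equivalence for monomials due to Dempwolff~\cite{DBLP:journals/dcc/Dempwolff18}, each listed exponent $k$ makes $X^k$ graph equivalent to $X^2$, $X^{p+1}$, or $X^{p^2+1}$, and therefore $X^k$ is planar.

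For the ``only if'' direction I would invoke~\cite{johnson1987projective} for $n=1$,~\cite{coulter2006classification} for $n=2$,~\cite{coulter_new} for $n=3$, and~\cite{DBLP:journals/ffa/CoulterL12} for $n=4$. The common skeleton of those arguments is as follows. Planarity of $X^k$ over $\F_{p^n}$ forces $\gcd(k,p^n-1)=2$ (see~\cite{DBLP:journals/dcc/CoulterM97}); moreover, since $Y\mapsto Y^p$ is a bijection of $\F_{p^n}$ one may assume $p\nmid k$, and reducing modulo $p^n-1$ one may assume $k$ is even with $2\le k<p^n-1$. One then studies the permutation polynomial $\Delta_k$ of degree $k-1$ — through Hermite-type divisibility criteria and/or the structure and collineation group of the affine plane of order $p^n$ associated with $X^k$ — in order to pin the residue class of $k$ down to the claimed list. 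For $n=1$ this is the classical fact that a planar function on a prime field is equivalent to $x\mapsto x^2$; for $n\in\{2,3,4\}$ it is the exponent analysis carried out in the cited works. The combinatorial form of the statement then follows by feeding the surviving exponents into Dempwolff's criterion and using $\gcd(k,p^n-1)=2$ to discard the branch $k\ell\equiv p^a$, exactly as in the proof of Corollary~\ref{cor:equivalence}.

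The main obstacle is the $\F_{p^4}$ case with $p\ge5$ from~\cite{DBLP:journals/ffa/CoulterL12}: this is where the exponent analysis is most delicate, and it is also where the hypothesis $p\ge5$ is genuinely needed. Indeed, over $\F_{3^4}$ the Coulter--Matthews exponent $\frac{3^3+1}{2}=14$ from Theorem~\ref{thm:known} gives a planar monomial that is \emph{not} graph equivalent to $X^2$ (one checks $14\not\equiv 3^a\cdot 2$ and $2\cdot 14\not\equiv 3^a\pmod{80}$ for every $a$), so no uniform argument can cover all odd $p$ at once. More conceptually, the reason the classification halts at $n=4$ is that the Dembowski--Ostrom conjecture is open for $p\ge5$, so one cannot a priori reduce to exponents of the shape $p^i+p^j$; each new value of $n$ requires a bespoke argument, which is precisely the gap that the computational results of the present paper are designed to fill.
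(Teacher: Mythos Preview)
The paper does not supply a proof of Theorem~\ref{thm:planar_subfield}; it is stated purely as a citation of the four references~\cite{johnson1987projective,coulter2006classification,coulter_new,DBLP:journals/ffa/CoulterL12}, with no argument given. Your proposal therefore goes further than the paper does: you derive the ``if'' direction explicitly from Theorem~\ref{thm:known} together with Dempwolff's equivalence criterion and the fact that graph equivalence preserves planarity, and you correctly defer the ``only if'' direction to the cited classifications, adding a useful high-level description of the techniques and of why the case $p=3$, $n=4$ must be excluded. Your ``if'' argument is sound (the parity checks on $p\cdot n/\gcd(i,n)$ and the reduction $p^i+p^j\equiv p^i(p^{j-i}+1)$ are correct), and your example showing that $X^{14}$ over $\F_{3^4}$ is not graph equivalent to $X^2$ is a nice touch. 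Since the paper itself offers nothing to compare against beyond the bare citations, there is no discrepancy to report; your write-up is simply more informative than the paper's treatment of this statement.
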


Our method for finding all planar monomials $X^k$ over $\F_{p^n}$ for $k=2,\dots,p^n-2$ is to first restrict to all exponents $k$ which fulfill $\gcd(k,p^n-1) = 2$ and for which $X^k$ is planar over the subfields $\F_{p},\F_{p^2},\F_{p^3}$, and $\F_{p^4}$ (if they exist) based on the conditions of Theorem~\ref{thm:planar_subfield}. If $n$ has a non-trivial divisor $i$ with $i>4$ and if we have already completed the classification of planar monomials over $\F_{p^i}$, we can also enforce the planarity of $X^k$ over $\F_{p^i}$ as a necessary condition. For each such monomial $X^k$ that is left, we only check one member of its equivalence class with respect to the equivalence relation stated in Corollary~\ref{cor:equivalence}. More precisely, we only check the exponent $k \in \{2,\dots,p^n-2\}$ if $k$ is the smallest integer in the set $\{p^ak \mod p^n-1 \mid a \in \{0,\dots,n-1\}\}$. 

\renewcommand{\algorithmicrequire}{\textbf{Input:}}
\renewcommand{\algorithmicensure}{\textbf{Output:}}
\begin{algorithm}
	\caption{\textsc{PlanarityCheck}} \label{alg:check}
	\begin{algorithmic}[1]
		\Require Positive integer $k$
		\Ensure 1 if $X^k$ is a candidate for being planar over $\F_{p^n}$ and 0 if $X^k$ is not planar over $\F_{p^n}$. 
		    \vspace{.1em}
		    \State initialize $Q = \{\}$ and $E = \{\}$
		    \For{$i=0$ to $N$}
		    \State sample $x \gets \F_{p^n}$ uniformly at random
		    \If{$x$ not in $Q$}
		    \If{$(x+1)^k - x^k$ in $E$}
		    \State \Return 0
		    \EndIf
		    \State insert $x$ in $Q$
		    \State insert $(x+1)^k - x^k$ in $E$
		    \EndIf
		    \EndFor
		    \State \Return 1
	\end{algorithmic}
\end{algorithm}
For each remaining exponent $k$, for checking the planarity of the monomial $X^k$ over $\F_{p^n}$ we make use of the assumption that $x \mapsto \Delta_k(x)$ behaves like a random mapping for almost all $k$ and make use of the birthday paradox. In other words we conduct a collision search on the image of the mapping $x \mapsto (x+1)^k- x^k$ and expect to find a collision after $\sqrt{p^n}$ trials. The method is formally described in Algorithm~\ref{alg:check}. Note that when Algorithm~\ref{alg:check} returns 1 on input $k$, the result might be a false positive and one further needs to evaluate the planarity of $X^k$. In order to avoid false positive candidates for $k$ with high probability, we set the parameter $N$ to $20\sqrt{p^n}$. This is justified as follows. Let $\text{Prob}_{p^n,N}$ denote the probability that there is no collision after $N$ steps. By our assumption on $\Delta_k(X)$, this probability is the same as the probability that there is no collision for choosing uniformly at random $N$ times in a set of $p^n$ elements. This probability is upper bounded by $e^{\frac{-N(N-1)}{p^n}}$ (see e.g., \cite{brink}), which gives in our case \[\text{Prob}_{p^n,20\sqrt{p^n}}\leq e^{\frac{-20\sqrt{p^n}\left(20\sqrt{p^n}-1\right)}{p^n}}\approx e^{-400}.\]
Note that the speed-up by lowering $N$ does not compensate for the enhanced probability of having to deal with false positives. 

We have implemented this search in c++ using the NTL library~\cite{shoup2001ntl} and our implementation is available in~\cite{code}.

\begin{table}[ht]
\caption{Number of graph equivalence classes containing planar monomials over $\F_{p^n}$ for $n\geq 5$.  \label{tab:results}}
    \centering
\begin{tabular}{cccccccccccccccccc}
\toprule 
$(n,p)$ & 3 & 5 & 7 & 11 & 13 & 17 & 19 & 23 & 29 & 31 & 37 & 41 & 43 & 47 & 53 & 59 & 61 \\ 
\midrule 
5 & 4 & 3 & 3 & 3 & 3 & 3 & 3 & 3 & 3 & 3 & 3 & 3 & 3 & 3 & 3 & 3 & 3 \\ 
6 & 3 & 2 & 2 & 2 & 2 & 2 & 2 & 2 & 2 & 2 & 2 &  &  &  &  &  &    \\ 
7 & 6 & 4 & 4 & 4 & 4 & 4 & 4 &  &  &  &  &  &  &  &  &  &    \\ 
8 & 4 & 1 & 1 & 1 & 1 & 1 & 1 &  &  &  &  &  &  &  &  &  &    \\ 
9 & 7 & 5 & 5 & 5 &  &  &  &  &  &  &  &  &  &  &  &  &    \\ 
10 & 6 & 3 & 3 & 3 &  &  &  &  &  &  &  &  &  &  &  &  &    \\ 
11 & 10 & 6 &  &  &  &  &  &  &  &  &  &  &  &  &  &  &   \\ 
12 & 5 & 2 &  &  &  &  &  &  &  &  &  &  &  &  &  &  &   \\ 
13 & 12 &  &  &  &  &  &  &  &  &  &  &  &  &  &  &  &   \\ 
14 & 9 &  &  &  &  &  &  &  &  &  &  &  &  &  &  &  &   \\ 
15 & 11 &  &  &  &  &  &  &  &  &  &  &  &  &  &  &  &   \\ 
16 & 8 &  &  &  &  &  &  &  &  &  &  &  &  &  &  &  &   \\ 
17 & 16 &  &  &  &  &  &  &  &  &  &  &  &  &  &  &  &   \\ 
18 & 10 &  &  &  &  &  &  &  &  &  &  &  &  &  &  &  &  \\ 
\bottomrule
\end{tabular} 
\end{table}

\section{Results}
We ran our algorithm on two AMD EPYC 7742 server processors and conducted the search for all finite fields of order at most $2^{30}$, as well as for the finite fields $\F_{37^6},\F_{17^8},\F_{19^8}$, $\F_{11^9}$, and $\F_{11^{10}}$. Note that we did not perform any search over $\F_{p^i}$ for $i \leq 4$ since all planar monomials are classified in those cases. The computationally hardest case took a few days to finish. For all finite fields for which we finished the computation, Table~\ref{tab:results} lists the number graph equivalence classes containing planar monomials. Besides the examples listed in Theorem~\ref{thm:known}, no new planar monomial was found.

\subsection*{Acknowledgment}
We thank Frederik Gosewehr for double-checking the correctness of our implementation.

\end{document}